\newcommand{\lolli}{\multimap}
\newcommand{\tensor}{\otimes}
\newcommand{\with}{\binampersand}
\newcommand{\one}{\mathbb{1}}
\newcommand{\btseq}[1]{\mathsf{Seq}\{#1\}}
\newcommand{\btsel}[1]{\mathsf{Sel}\{#1\}}
\newcommand{\btrepeat}[1]{\mathsf{Repeat}\{#1\}}
\newcommand{\btskip}{\btseq{}}
\newcommand{\btabort}{\btsel{}} 
\newcommand{\btcond}[2]{?{#1}{.}\;#2}
\newcommand{\btop}[2]{#1(#2)}
\newcommand{\fail}{\mathsf{FAIL}}
\newcommand{\eval}[3]{#1 \triangleright #2 \Downarrow #3}
\newcommand{\admits}{\Vdash}
\lstdefinelanguage{ceptre}{
	morekeywords = {type, pred, as, let, in, forall},
    otherkeywords = {->, :, -o, *, |, =, \\, !, $, ;},
    morekeywords = [1]{->, :, -o, *, |, =, \\, !, $, ;},
    morecomment = [l]{//}
}
\newcommand*\idstyle{%
        \expandafter\id@style\the\lst@token\relax
}
\def\id@style#1#2\relax{%
        \ifcat#1\relax\else
                \ifnum`#1=\uccode`#1%
                        \bfseries
                \fi
        \fi
}
\begin{document}

\title{A Resourceful Reframing of Behavior Trees}


\author{Chris Martens}
\affiliation{
  \department{Computer Science Department}              
  \institution{North Carolina State University}            
  \city{Raleigh}
  \state{NC}
  \country{USA}                    
}
\email{martens@csc.ncsu.edu}          

\author{Eric Butler}
\affiliation{
  \department{Computer Science and Engineering}             
  \institution{University of Washington}           
  \city{Seattle}
  \state{WA}
  \country{USA}                   
}
\email{edbutler@cs.washington.edu}         

\author{Joseph C. Osborn}
\affiliation{
  \department{Computational Media}             
  \institution{University of California, Santa Cruz}           
  \city{Santa Cruz}
  \state{CA}
  \country{USA}                   
}
\email{jcosborn@soe.ucsc.edu}         

\begin{abstract}
Designers of autonomous agents, whether in physical or virtual environments, need to express nondeterminisim, failure, and parallelism in behaviors, as well as accounting for synchronous coordination between agents. Behavior Trees are a semi-formalism deployed widely for this purpose in the games industry, but with challenges to scalability, reasoning, and reuse of common sub-behaviors.

We present an alternative formulation of behavior trees through a language design perspective, giving a formal operational semantics, type system, and corresponding implementation. We express specifications for atomic behaviors as linear logic formulas describing how they transform the environment, and our type system uses linear sequent calculus to derive a compositional type assignment to behavior tree expressions. These types expose the conditions required for behaviors to succeed and allow abstraction over parameters to behaviors, enabling the development of behavior ``building blocks'' amenable to compositional reasoning and reuse.
\end{abstract}



\keywords{linear logic, behavior trees, programming languages, type systems}  

\maketitle

\section{Introduction}
\label{sec:intro}


Specifying the desired behaviors of agents in environments is a major theme in artificial intelligence.  Analysts often need to define particular policies with explicit steps, but the agents must also acknowledge salient changes in a potentially hostile or stochastic environment.  This challenge arises in applications including robotics, simulation, and video game development. Games in particular bring challenges related to interaction with human decision-makers:  even for games notionally working against the objectives of the player, the activity of game design is centrally concerned with helping the player learn something or have an emotional experience, and in this sense can be thought of as a cooperative system between agents with different knowledge states, not unlike human-robot teams. The behaviors of non-player characters (NPCs) in games must be designed in support of this goal.

Game designers must be able to specify that a given agent should patrol a hallway until it gets hungry (or its battery runs low) and goes home for a snack (or to recharge); but if the agent sees a one-hundred dollar bill on the ground on the way to where it recuperates, it should force a detour.  In some designs, we would want an adversary (e.g., the player) to be able to trick the agent into running out of fuel by this mechanism; in other designs we would hope the agent ignores optional but attractive diversions and prioritizes severe need. We can easily imagine two distinct agents within the same game which are differentiated only by whether they can be misled in such a way.

Game character AI designers have somewhat contradictory goals that distinguish their project from, for example, game-playing AI whose objective is optimal play.  On the one hand they want \emph{believable} characters who react reasonably to player actions and to the behaviors of other non-player characters; but on the other hand they want to craft certain very specific \emph{experiences} that nudge the player into trying new actions or approaching problems from a specific direction or that prevent the agent from performing awkward-looking sequences of 3D animations.  Traditionally, game character AI was implemented with explicit state machines built by hand; more recently behavior trees, goal-oriented action planning, and utility-based systems have come into vogue.


\begin{figure}[ht]
\includegraphics[width=0.6\linewidth]{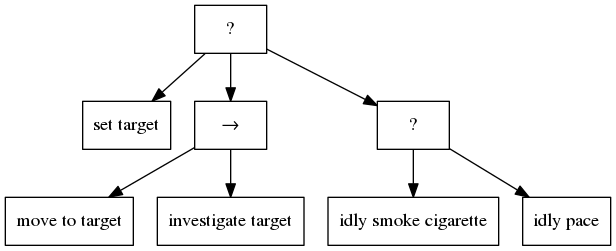}
\caption{%
A example behavior tree for a noise-investigation behavior. The tree is evaluated in preorder traversal. Leaf nodes specify actions in the world (such as moving to a target), which can succeed or fail. Interior nodes combine children into composite behaviors. The arrow ($\rightarrow$) is sequencing (run each child until first failure), and the question (?) is selection (run each child until first success).%
}
\label{fig:investigate-bt-viz}
\end{figure}

Behavior trees are a scripting system for agents in virtual worlds, allowing designers of virtual agents to visually construct behavioral flowcharts based on conditions on the world around them. They are widely employed in the video games industry~\cite{rabin13gameaipro} for describing the ``artificial intelligence'' behavior of non-player characters, such as enemy units in combat simulators and members of virtual populations in open world-style games. Behavior trees have also been used for robot control~\cite{marzinotto2014towards}. They are often described as merging the utility of decision trees and state machines, allowing repeated or cyclic behaviors that modify and check state (internal or shared) as they execute. Figure~\ref{fig:investigate-bt-viz} shows an example behavior tree for a hypothetical security guard character. The tree defines how to sequence and prioritize basic behaviors of listening for noises, investigating the source of the noise, or doing idle activities. During game simulation, behavior trees are typically re-executed with some frequency depending on the game, as often as once or more per time step. The example in Figure~\ref{fig:investigate-bt-viz}, for instance, needs to be executed twice to both acquire a target and investigate it.

Since behavior trees are often deployed in multi-agent simulations and with complex state-changing behavior, the ability for a designer to reason about the correctness of the tree quickly succumbs to its size and branching factor. Even for straightforward sequences of behaviors, the preconditions and postconditions are left unstated. For example, if an agent is told to \verb|move to door|, \verb|open door|, and \verb|go through door|, we might reasonably expect that in all circumstances where the door is accessible, the agent will be on the opposite side of it by the time its behavior finishes. However, this is not possible to conclude unless we reason both about the conditions and effects of the individual actions and how the effects of earlier actions are expected to connect to the conditions of later ones. Such a sequence of actions could fail, for instance, if the player were to intervene and close the door immediately after the agent opened it. Furthermore, the success of behaviors may depend on external conditions on the environment: an agent may expect another agent to have placed an important item that it needs, and the behavior is only correct on the condition that this dependency has been satisfied.

We describe an approach to reasoning compositionally about behavior trees
in such a way that they may be constructed in small units, typechecked
against an expected behavioral schema, and combined to form behaviors with
new, compositionally-defined types. The approach requires the author to
provide a linear logical specification of the atomic actions, i.e.\ the
leaves of the tree; types for complex expressions formed from these leaves
are derived from a linear logical interpretation of the behavior tree
operations (sequencing, selection, and conditions).
%
The present work can be seen as a way to regain some of the guarantees
given by reasoning about a behavior from start to finish without losing the
reactivity, which is the main benefit of using behavior trees over, for
example, static plan generation~\cite{ghallab2016automated}.

Since behavior trees are a relatively simple formalism repeatedly realized
in different incarnations, and since game developers are under somewhat
notorious pressure to ship products, there is no authoritative,
\emph{standard} version of behavior trees.  As alluded to above, a
recurring issue with behavior trees is resolving the apparent tension
between reacting to unexpected changes in the environment on the one hand
and to performing authored behaviors over a longer duration on the other
hand.  The ad hoc extensions applied to behavior trees in the wild are
often intended to resolve this tension.  The approaches described in this
paper could give a theoretical foundation for addressing these ``hacks''
employed in practice---and, potentially, for more principled and
better-behaved adaptations of behavior trees towards the problem of
designing complex agent and character behaviors.

Our contributions are a formal specification and operational semantics for
our formulation of behavior trees, a type system and synthesis algorithm
backed by an interpretation in linear logic, and an implementation of these
systems in Standard ML.\@ These results represent the first step of toward
building a toolkit for robust authoring of virtual agent behaviors,
combining support for correct human authorship and certified goal-driven
synthesis of behaviors.

The rest of the paper is organized as follows: Section~\ref{sec:relwork}
discusses related work; Section~\ref{sec:background} describes further how
behavior trees are used in the games industry, Section~\ref{sec:linear}
explains linear logical specifications and how they may be used to describe
a possibility space for virtual worlds; Section~\ref{sec:btl} describes the
syntax and operational semantics of our behavior tree language;
Section~\ref{sec:types} describes the type system and its guarantees;
Section~\ref{sec:impl} describes our implementation;
Section~\ref{sec:discussion} discusses our current scope and future work;
and ~\ref{sec:conclusion} summarizes our contributions.


\section{Related Work}
\label{sec:relwork}
For the most part, efforts to provide robust formalisms to designers of
virtual agents have been disjoint from formal and language-based
approaches.  We identify related work in two key areas: previous attempts
to characterize virtual agent behaviors from a formal methods standpoint,
and related models of computation that have been characterized with linear
logic.

\subsection{Formal accounts of behavior trees }


Marzinotto et al. provide an account~\cite{marzinotto2014towards} of
behavior trees in the context of robot control, citing a dearth of
mathematical rigor prior to their contribution. Their work contributes
the first mathematical definition of behavior trees and accounts for their
expressive capabilities. 

More recently, there has been some very recent work in applying synthesis and
verification to AI behavior trees~\cite{colledanchise2017synthesis}. The formal
basis for said work is model checking in linear temporal logic (LTL). Our work,
by contrast, seeks a type-theoretic solution that supports modular reuse of
behaviors. 

\subsection{Linear logical accounts of agents and processes}

Linear Session Types~\cite{caires2016linear} are an important touchstone for
this work as another characterization of a pre-existing system,
$\pi$-calculus, under a semantics derived from linear sequent calculus. Our
work does not identify a direct logical correspondence between logical and
operational notions in the same way, but similarly provides a basis for
static reasoning about complex behaviors.

The CLF~\cite{watkins2003concurrent} logical framework and corresponding
implementation Celf~\cite{schack2008celf} form a basis for interpreting
linear logic formulas as programs under a proof-construction-as-execution
paradigm (logic programming). While operationally, this approach diverges
from the semantics of behavior trees, the representation formalism informs
out approach.

Finally, linear logic has been used to account for planning in a number of
interesting ways: {\em deductive planning\/}~\cite{cresswell1999deductive}
runs with the observation that, in addition to Masseron et al.'s observation that 
linear proof search can model planning~\cite{masseron1993generatingI},
linear proofs {\em generalize\/} plans: they can characterize recursive and
contingent (branching) plans, recovering some of the same expressiveness as
behavior trees. Dixon et al.~\cite{dixon2006planning} apply deductive
planning to an agent-based domain for dialogue-based environments. This
work encourages us to consider integrating the generative abilities of
planners with the reactivity of behavior trees in future work.

\section{Background: Behavior Trees in Games}
\label{sec:background}
Behavior trees are widely used to define the behavior of non-player
characters in digital game genres ranging from strategy and simulation to
first-person shooters. The major game-making tools (Unreal Engine, Unity
3D, CryEngine, Amazon Lumberyard, and others) all either provide natively
or have third-party implementations of the technique.
The canonical examples of behavior trees' use in games come from the
\textit{Halo} series of first-person shooter games~\cite{isla2005halo}.
Notable in their formulation is that most of the tree is shared across the
different types of enemy agents that appear in the game, which reflects the
difficulty of authoring good and reasonable behavior policies in general.
Behavior trees give authors a way to reuse some behaviors and override
others from agent to agent.

Behavior trees are usually characterized as a \emph{reactive} AI formalism,
in this context meaning that agents are defined in terms of their reactions
to a changing environment, rather than by a top-down plan that tries to
achieve a goal by considering contingencies in advance.  Certainly, even
finite state machines can be made reactive by adding appropriate
transitions, but scaling them to myriad potential game events quickly
overwhelms authors. Behavior trees reduce that burden by asking a behavior
author to structure the reactive behaviors in a tree, implicitly defining
which behaviors supersede or interrupt which other behaviors by their
position in a preorder traversal of that tree.

A behavior tree is a data structure describing how an agent decides on its
next actions, and at the leaves some primitives for executing those
actions.  Behavior trees are repeatedly \emph{evaluated} and on each
evaluation they process their nodes in sequence.  When a node is processed,
it evaluates with some status: \texttt{RUNNING}, \texttt{SUCCEEDED}, or
\texttt{FAILED}.  Different sorts of nodes in the tree are specified in
terms of the circumstances under which they evaluate to each return value.

A key question in behavior tree semantics is whether a tree which ends an
evaluation with the \texttt{RUNNING} status should, on the next evaluation,
continue from where it left off; the alternative is for it to begin its
next evaluation from the root again.  The latter approach is more
\emph{reactive} to changes in the environment or interruptions to
behaviors, but in the former it is easier to specify and conceptualize
behaviors which take some time and should not be interrupted.  It is also
easier to avoid behavior \emph{oscillations} in the former evaluation
strategy. For example, with the investigation example from
Figure~\ref{fig:investigate-bt-viz}: with the latter approach, the agent
can be interrupted by a new noise when moving to a target, while with the
former approach, the agent will fully investigate a target without
distraction. Game designers have explored both semantics and even hybrids
between these approaches; we leave our discussion of this issue until
Sec.~\ref{sec:btl}.

Leaf nodes of the tree can be domain-specific \emph{conditions} (which
succeed if the condition is currently satisfied or fail otherwise) or
domain-specific \emph{actions} (for example, setting the value of a
variable or triggering some external action). These are the only operations
which can interact with the environment. The actions in
Figure~\ref{fig:investigate-bt-viz} include setting a variable representing
the agent's current target or physically navigating the agent towards said
target. Failure may come from, for example, there being no navigable path
to the target. In video games, these are often implemented using arbitrary
program code working outside of the behavior tree formalism.

Non-leaf nodes come in three key variants (others are usually possible to
define as syntactic sugar).  First, \emph{sequences} evaluate each of their
child nodes from left to right, and are \texttt{RUNNING} if any child node
is \texttt{RUNNING}, \texttt{FAILED} if any child is \texttt{FAILED}, or
\texttt{SUCCEEDED} otherwise.  Second, \emph{selectors} also evaluate their
child nodes left to right, but are \texttt{RUNNING} if any child is
\texttt{RUNNING}, \texttt{SUCCEEDED} if any child has \texttt{SUCCEEDED},
and \texttt{FAILED} if all the child nodes are \texttt{FAILED}.  Third, the
\emph{parallel} node evaluates each of its children independently of each
other, and has \texttt{SUCCEEDED} if more than a certain number of its
children succeeds, \texttt{FAILED} if more than a certain number fail, and
\texttt{RUNNING} otherwise.  It is also implicit in the definition of
behavior trees that there is some external environment where state can be
stored and persisted from evaluation to evaluation.  


In practice, there are many other types of nodes that can alter the
semantics of the tree in arbitrary ways, often violating the assumption of
a preorder traversal:  repeaters which evaluate their children over and
over until they evaluate with some status, stateful versions of sequence
and selector with \emph{memory} that remember when they get stuck in
\texttt{RUNNING} and only evaluate from that stuck node forwards in their
next evaluation, and so on.  We ignore such extensions in this work to
simplify the presentation.  Most of the extensions of behavior trees are
meant to facilitate long-running actions, to limit the reactivity of
behavior trees (e.g., to allow interruptions only at designer-defined
times), and to ease the sharing of behavior tree or character state across
situations, characters, and actions.  Actions, conditions, and decorators
often themselves involve arbitrary code in practice, so in our presentation
of the formal semantics we require a linear logic formulation of the leaf
nodes.


\section{Action Specifications in Linear Logic}
\label{sec:linear}
As a first step towards a type system for general behaviors, we concretize
{\em action specifications} for describing the behavior of an atomic
action, such as ``idly smoke cigarette'' in
Figure~\ref{fig:investigate-bt-viz}. 
Although in reality, this behavior may simply take the form of an
observable effect (e.g., some animation), semantically, there are certain
things we expect for it to make sense: for instance, that the agent has a
supply of cigarettes (and perhaps that this action spends one). Other
actions, like passing through a door, have more important requirements and
effects, such as requiring being near the door and resulting in the door
being open: these are aspects of the environment that may be created, or
depended on, by other agents (or the same agent at another time).

There is a long line of successful work on describing actions in a
protocols and virtual worlds using any of a class of related formalisms:
multiset rewriting, Petri nets, vector addition systems, and linear logic.
These systems have in common an approach to specification using {\em rules}
(or {\em transitions} in some systems) that describe dependencies and
effects, such that the cummulative effects of applying those rules may be
reasoned about formally.\footnote{Planning domain description languages
 also share this approach, but most standards such as PDDL~\cite{mcdermott1998pddl},
 do not have as clean of a compositional interpretation due to their allowance for
the ``deletion'' of facts that do not appear as preconditions.}

The following example uses a linear logic-based notation adapted from
Ceptre~\cite{martens2015ceptre} to describe action specifications for an
\texttt{Investigation} world that could assign meaning to the actions used
in Figure~\ref{fig:investigate-bt-viz}:

\begin{lstlisting}[frame = single]
set_target       : no_target -o has_target.
move_to_target  : has_target -o has_target * at_target.
investigate      : has_target * at_target * heard_noise -o no_target.
smoke            : has_cigarette -o 1.
pace             : 1 -o 1.
\end{lstlisting}

The ``lolli'' syntax \verb|A -o B| describes the ability to transition from a world
in which \verb|A| obtains to one in which \verb|A| no longer obtains and
has been replaced with \verb|B|. The atomic propositions include facts like
\verb|at_door| and \verb|door_open|, which represent pieces of world state,
the ``tensor'' \verb|p * q| syntax conjoins them, and \verb|1| is the unit of
tensor. 
World configurations can be represented as multisets (or linear contexts)
$\Delta$ specifying which facts hold, such as 
\lstinline|{no_target, heard_noise, has_cigarette, has_cigarette}|.

\begin{figure} 

\centering
\includegraphics[width=0.6\textwidth]{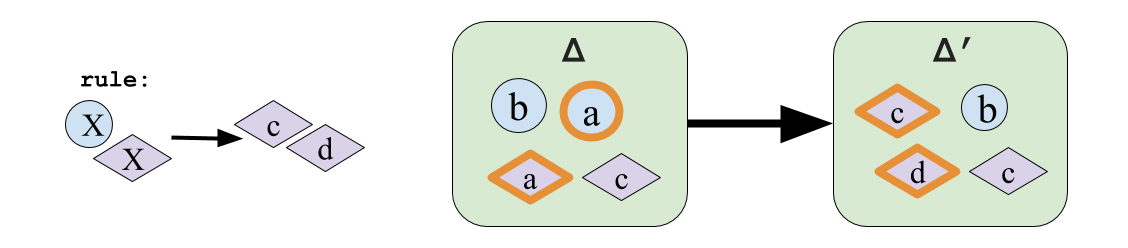}

\caption{One step of multiset rewriting execution, visualized. 
Each color/shape (purple diamond, blue circle) represents a distinct predicate; the contents of those shapes are terms (a,
b, c) or term variables (X). This diagram represents a transition of the
state $\Delta = \{diamond(a), circle(a), circle(b), diamond(c)\}$ along the rule
$circle(X) \tensor diamond(X) \lolli diamond(c) \tensor diamond(d)$ to the new
state $\Delta' = \{diamond(c), diamond(d), circle(b), diamond(c)\}$. The thick orange borders on some atoms highlight which ones are replaced and added by the rule.  }

\label{fig:msrw}
\end{figure}

In general, predicates can take arguments (e.g., \lstinline{at(castle)}) and
rules can universally quantify over variables that stand in for term
arguments, in which case states are always ground (contain no variables)
and the application of rules identifies appropriate substitutions for
variables for which the rule applies. Figure~\ref{fig:msrw} visualizes a
step of execution for an example.

Multiset rewriting has been used commonly to model nondeterminism and
concurrency: rulesets can be nondeterministic whenever multiple rules may
apply to a given state, and concurrency arises from the partial-order
causal relationships between rules firing. If two rules operate on disjoint
parts of the state, for instance, they can be considered to fire
simultaneously, whereas rules that depend on the effects of previous rules
firing must obey sequential ordering. See Figure~\ref{fig:causal} for a
diagram of the causal relationships between actions for a particular
program trace in which the agent sets a target, moves to the target,
investigates a noise, and smokes a cigarette.

For the work described in this paper, however, we are less interested in
the multiset rewriting interpretation of the rules.  The specification
under the multiset rewriting interpretation alone does not give us as
authors any control over strategies for action selection or goal-driven
search. Instead, it can be thought of as a description of the {\em space of
possible actions} and a way of calculating their cumulative effects.
Behavior trees, then, can be understood as directives for how to explore
this space.

\begin{figure} 

\centering
\includegraphics[width=0.9\textwidth]{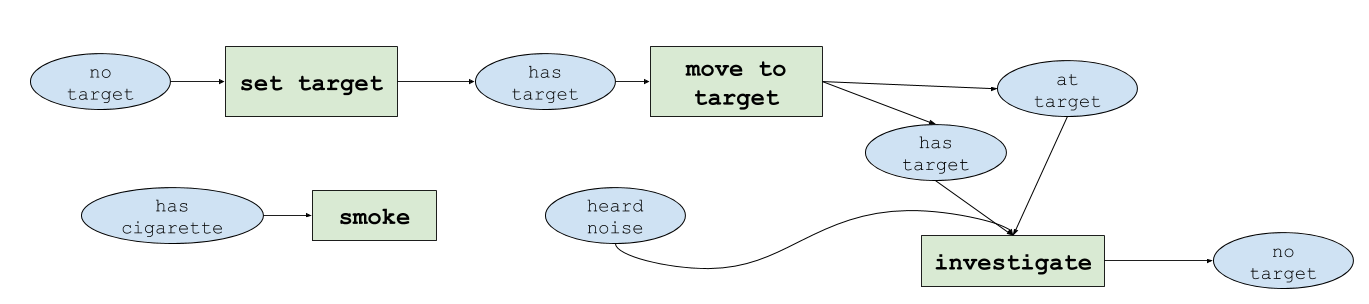}

\caption{A causal diagram for a possible trace of actions under the
  multiset rewriting interpretation of the \texttt{Investigate}
specification.}
\label{fig:causal}
\end{figure}


Formally, we define action specifications under the following grammar:
\begin{eqnarray*}
  arg   &::=& t \mid x \\
  args  &::=& \cdot \mid arg, args \\
  S     &::=& p(args) \mid \one \mid S \tensor S \\ 
  opdecl    &::=& name : xs{.}\; S \lolli S\\
  \Sigma &::=& \cdot \mid \Sigma, opdecl
\end{eqnarray*}

$\Sigma$ is a collection of specifications for operators $op$. $\Sigma$
may also specify a collection of valid domain types over which the
arguments of operators may range; for example, the operator $move(Dir,N)$ may
range over directions and natural numbers, perhaps meaning to move in that
direction a certain number of units. 
The world state $\Delta$ is represented as a linear logic context, i.e. a
multiset of atomic propositions $p(args)$ representing available resources. 

In the next section, we assume an arbitrary signature $\Sigma$ for each
action that computes a function on states, which does not depend on the
linear logical interpretation. However, we revisit this idea in
Section~\ref{sec:types} to assign types to behavior tree expressions.

\section{BTL: A Formal Semantics for Behavior Trees}
\label{sec:btl}
In this section we describe BTL, a formal calculus for describing
synchronous agent behaviors with sequencing, branching, conditions, and
loops.

The goals of this system are similar in many ways to the BTs used in
practice: we aim to provide simple authoring affordances for scripting
reactions to different circumstances in an implicit environment that is
changing around the agent, and which the agent can change. 
We also adopt some goals that are not currently met by industry practice:
\begin{itemize}
  \item Compositional reasoning. In order to be able to reason about BT
    nodes in terms of the behaviors of their subtrees, we need to know that
    subtree behaviors won't be interrupted in unknowable states.
  \item Debugging support---specifically, the ability for authors to state
    what they expect a behavior to accomplish and have this expectation
    checked algorithmically. The algorithm should be able to identify where
    in the tree a stated expectation is violated.
  \item Support for the expression of coordinated multi-agent behaviors.
    This requirement means that we question the notion of an action
    dependency necessarily meaning {\em failure} and instead (or
    additionally) require a {\em blocking} semantics (for instance, an
    agent may wait until another agent joins them in the same location to
    hand off a needed item).
  \item Support for the eventual integration of {\em behavior synthesis},
    or algorithms that accept a propositional goal for the world state and
    generate BT subtrees corresponding to conditional plans that achieve
    the goal.
\end{itemize}

These nonstandard goals entail 
some tradeoffs of
expressiveness. While it would be ideal to retain, for example, the
``reactive'' nature of BTs that allow them to break sequential actions to
tend to urgent interruptions, we do not adopt this form of reactivity by
default because it would preclude the ability to reason about sequential
behaviors compositionally. In Section~\ref{sec:discussion} we revisit these
expressiveness tradeoffs and consider ways to re-incorporate additional
features.

%
%
%
%

\subsection{Expressions}

The expressions of BTL are:

\vspace{-1em}
\newcommand{\actpar}[2]{{#1}{\parallel}{#2}}
\begin{eqnarray*}
  \alpha &::=&  \btop{op}{args} 
                \mid \btcond{p}{\alpha} 
                \mid \btseq{\alpha; \alpha}
                \mid \btsel{\alpha + \alpha}
                \mid \btskip 
                \mid \btabort
                \mid \btrepeat{\alpha}
\end{eqnarray*}

Intuitively, $\btop{op}{args}$ is an atomic
action, invoking a pre-defined operator on a set of ground arguments (such
as \lstinline{move(left)}); 
$\btseq{\alpha;\alpha}$ is a sequence node;
$\btseq{\alpha + \alpha}$ is a selector node; 
$\btskip$ is the unit of sequencers (does nothing);
$\btabort$ is the unit of selectors (always fails);
$?p.\; \alpha$ checks the condition $p$ and executes $\alpha$ if it holds,
failing otherwise;
and $\btrepeat{\alpha}$ is a repeater node, running $\alpha$ until failure. 

\subsection{Operational Semantics}

We define an operational semantics for behavior trees in terms of what they
may do to an abstract world state, using a big-step evaluation judgment
$\eval{\alpha}{\Delta}{\delta}$, where $\Delta$ is a world state and
$\delta$ is the result of evaluating a BTL expression, either a new world
state (on successful execution) or $\fail$.

The evaluation judgment requires a few preliminaries to define. First, we
implicitly index the judgment by a signature $\Sigma$, which provides a
specification for a {\em transition function} $t : \tau \to \Delta \to
\delta$ for each operator (atomic action) available to an agent, which
takes arguments of type $\tau$, computes a transformation on a world state
if the action can be performed, and returns $\fail$ otherwise.  Concretely,
our linear logical action specifications can play this role.
Second, we assume
a notion of a condition ``holding for'' a world state, expressed by the
judgment $\Delta \admits p$. Again, while evaluation can be defined holding
this judgment abstract, in we can fulfill this definition by expressing
conditions in terms of a (positive) subset of linear logic formulas and
interpreting $\admits$ as affine provability.


Evaluating an operation consists of looking up its transtition function in
$\Sigma$ and applying that function to the current state; evaluating a
condition requires that the current state satisfies the condition, and
otherwise fails:

\[
  \infer
  { \eval{\btop{op}{args}}{\Delta}{\delta} }
  { \Sigma(op) = t
    &
    t(args, \Delta) = \delta
  }
  \qquad
  \infer
  { \eval{\btcond{S}{\alpha}}{\Delta}{\delta} }
  { \Delta \admits S &
    \eval{\alpha}{\Delta}{\delta}
  }
  \qquad
  \infer
  { \eval{\btcond{S}{\alpha}}{\Delta}{\fail} }
  { \Delta \not{\admits} S }
\]

A sequence evaluates by chaining the states computed by successful subtrees
through successive subtrees in the sequence, and fails if any subtree
fails:

\[
  \infer
  { \eval{\btskip}{\Delta}{\Delta} }
  {}
  \qquad
  \infer
  { \eval{\btseq{\alpha; \alpha'}}{\Delta}{\delta} }
  { \eval{\alpha}{\Delta}{\Delta'}
    &
    \eval{\btseq{\alpha'}}{\Delta'}{\delta}
  }
  \qquad
  \infer
  { \eval{\btseq{\alpha; \alpha'}}{\Delta}{\fail} }
  { \eval{\alpha}{\Delta}{\fail}
  }
\]

A selector succeeds with the first successful subtree and fails if no
options are possible:

\[
  \infer
  { \eval{\btabort}{\Delta}{\fail} }
  {}
  \qquad
  \infer
  { \eval{\btsel{\alpha + \alpha'}}{\Delta}{\delta} }
  { \eval{\alpha}{\Delta}{\fail}
    &
    \eval{\btsel{\alpha'}}{\Delta}{\delta}
  }
  \qquad
  \infer
  { \eval{\btsel{\alpha + \alpha'}}{\Delta}{\Delta'} }
  { \eval{\alpha}{\Delta}{\Delta'} }
\]

Repeaters continue evaluating the underlying expression until failure:

\[
  \infer
  { \eval{\btrepeat{\alpha}}{\Delta}{\delta} }
  { \eval{\alpha}{\Delta}{\Delta'}
    &
    \eval{\btrepeat{\alpha}}{\Delta'}{\delta}
  }
  \qquad
  \infer
  { \eval{\btrepeat{\alpha}}{\Delta}{\Delta} }
  { \eval{\alpha}{\Delta}{\fail}
  }
\]

This definition of BTL adopts similar conventions and semantics to process
algebras, such as the adoption of two key operators, sequential
(conjunctive) and choice (disjunctive) composition, which have certain
algebraic properties. In the case of BTL, evaluation respects the following
structural congruence:

\renewcommand{\cong}{\equiv}

\begin{eqnarray*}
  \btseq{\btskip;\alpha}  &\equiv \btseq{\alpha} \equiv& \btseq{\alpha; \btskip}\\
  \btseq{\alpha; \btseq{\beta; \gamma}} &\equiv&
      \btseq{\btseq{\alpha;\beta}; \gamma} \\
  \btsel{\btabort + \alpha} &\equiv \btsel{\alpha} \equiv& \btsel{\alpha + \btabort}\\
  \btsel{\alpha + \btsel{\beta + \gamma}} &\equiv&
      \btsel{\btsel{\alpha + \beta} + \gamma} \\
  \btseq{\alpha; \btsel{\beta + \gamma}} &\equiv&
      \btsel{\btseq{\alpha; \beta} + \btseq{\alpha; \gamma}}\\
  \btseq{\btsel{\alpha + \beta}; \gamma} &\equiv&
      \btsel{\btseq{\alpha; \gamma} + \btseq{\beta; \gamma}}
\end{eqnarray*}

In other words, sequences form a monoid with the unit $\btskip$; selectors
form a monoid with the unit $\btabort$; and sequencing distributes over
selection. We state that this equivalence is respected by evaluation but
omit the proof for brevity:

\begin{conjecture}
BTL operational semantics respects congruence:
If $\eval{\alpha}{\Delta}{\delta}$ and $\alpha \cong \beta$ then
 $\eval{\beta}{\Delta}{\delta}$.
\end{conjecture}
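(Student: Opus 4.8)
The plan is to use that $\equiv$ is the least congruence on BTL expressions generated by the six displayed equations, so it suffices to verify the claim for a single use of one base equation at the root of an arbitrary evaluation context and then close this up under the congruence and under reflexivity, symmetry, and transitivity. Since every base equation is symmetric, I would prove the stronger biconditional $\eval{\alpha}{\Delta}{\delta}$ iff $\eval{\beta}{\Delta}{\delta}$ for all $\Delta,\delta$; writing $\alpha \sim \beta$ for this relation, the theorem becomes $\equiv\ \subseteq\ \sim$. As $\sim$ is manifestly an equivalence, this reduces to two obligations: each base equation lies in $\sim$, and $\sim$ is a congruence.

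Before the base equations I would record a determinism lemma: for each $\alpha$ and $\Delta$ there is at most one $\delta$ with $\eval{\alpha}{\Delta}{\delta}$. This is a routine induction on the evaluation derivation, using that each operator denotes a function $t$ and that $\Delta \admits S$ is a definite relation (it holds or fails, never both). Determinism is the linchpin of the distributivity cases, where a subexpression is evaluated once in the factored form but duplicated across selector branches in the distributed form; the two copies must return identical verdicts for the equation to have any chance of holding.

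The monoid laws for $\btseq{-}$ and $\btsel{-}$ are then dispatched by inversion: the unit laws use that $\btskip$ returns its input state and $\btabort$ always yields $\fail$, and the associativity laws follow by observing that both sides thread $\Delta$ through $\alpha,\beta,\gamma$ in the same order and fail under the same first-failure condition. \emph{Left distributivity}, $\btseq{\alpha; \btsel{\beta + \gamma}} \equiv \btsel{\btseq{\alpha; \beta} + \btseq{\alpha; \gamma}}$, goes through by case analysis on whether $\alpha$ succeeds, invoking determinism to equate the two evaluations of the duplicated prefix $\alpha$. For the congruence closure I would prove a compatibility lemma for each constructor (if $\alpha \sim \alpha'$ then $\btcond{p}{\alpha} \sim \btcond{p}{\alpha'}$, $\btrepeat{\alpha} \sim \btrepeat{\alpha'}$, and $\alpha$ may be replaced in either slot of $\btseq{-;-}$ and $\btsel{-+-}$); each is one inversion step, except $\btrepeat{-}$, whose recursive rule needs induction on derivation height.

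The main obstacle is \emph{right distributivity}, $\btseq{\btsel{\alpha + \beta}; \gamma} \equiv \btsel{\btseq{\alpha; \gamma} + \btseq{\beta; \gamma}}$, and here I expect the plan to break rather than merely become hard. The selector commits to its first succeeding branch, so in the factored form $\btseq{\btsel{\alpha + \beta}; \gamma}$ a success of $\alpha$ locks in its state and a subsequent failure of $\gamma$ makes the whole expression fail. In the distributed form, however, failure of $\btseq{\alpha; \gamma}$ drives the selector to try $\btseq{\beta; \gamma}$, effectively backtracking over the $\alpha/\beta$ choice on the basis of $\gamma$'s outcome. Concretely, choose $\Sigma,\Delta$ so that $\alpha$ and $\beta$ both succeed on $\Delta$, while $\gamma$ fails on $\alpha$'s output but succeeds on $\beta$'s; then $\eval{\btseq{\btsel{\alpha + \beta}; \gamma}}{\Delta}{\fail}$ whereas the distributed form succeeds, and by determinism these verdicts cannot be reconciled. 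Thus this equation is unsound for the committed-choice semantics and the conjecture fails as stated. The natural repairs are to drop the right-distributivity equation (retaining a monoid-plus-left-distributivity theory), to weaken $\equiv$ into a one-sided refinement preorder, or to restrict the law to $\gamma$ that never fail after a success of $\alpha$ or $\beta$.
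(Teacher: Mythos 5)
You cannot be compared against the paper's own argument for the simple reason that there is none: the paper states this result as a conjecture and explicitly omits the proof ``for brevity.'' Judged on its own merits, your analysis is sound, and its negative conclusion is the important part: the conjecture as stated is false, and the culprit is exactly the equation you single out. Your determinism lemma does hold (each operator denotes a function $t$, and $\Delta \admits S$ either holds or does not, never both), the monoid and associativity laws go through by the inversions you describe, and left distributivity $\btseq{\alpha; \btsel{\beta + \gamma}} \equiv \btsel{\btseq{\alpha;\beta} + \btseq{\alpha;\gamma}}$ is indeed rescued by determinism, because the duplicated prefix $\alpha$ is re-run from the same state $\Delta$ (a failed branch leaves no residual effects in this big-step semantics, so fallthrough restarts from $\Delta$ unchanged).

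Your schematic counterexample to right distributivity instantiates concretely in the paper's own \texttt{Investigation} signature. Take $\Delta = \{\mathtt{has\_target}, \mathtt{heard\_noise}, \mathtt{has\_cigarette}\}$, $\alpha = \btop{\mathtt{smoke}}{}$, $\beta = \btop{\mathtt{move\_to\_target}}{}$, $\gamma = \btop{\mathtt{investigate}}{}$. Then $\btsel{\alpha+\beta}$ commits to its first child $\alpha$, yielding $\{\mathtt{has\_target}, \mathtt{heard\_noise}\}$, on which $\gamma$ fails for lack of $\mathtt{at\_target}$; hence $\eval{\btseq{\btsel{\alpha+\beta};\gamma}}{\Delta}{\fail}$. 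In $\btsel{\btseq{\alpha;\gamma} + \btseq{\beta;\gamma}}$, however, the first branch fails, the selector falls through, and the second branch succeeds with $\{\mathtt{no\_target}, \mathtt{has\_cigarette}\}$; by your determinism lemma it cannot also evaluate to $\fail$. So the two sides disagree, and the conjecture fails in both directions of the equation. Your proposed repairs are also the reasonable ones: drop or restrict the right-distributivity axiom (keeping the monoid-plus-left-distributivity theory), or demote the equivalence to a refinement preorder oriented so that the factored form is refined by the distributed one. With that equation removed, the remainder of your plan---base equations shown to lie in $\sim$, plus per-constructor compatibility lemmas, with induction on derivation height for $\btrepeat{\cdot}$---does constitute a complete proof of the weakened statement, which is more than the paper itself supplies.
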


While the system bears resemblance to models of concurrency such as
CSP~\cite{hoare1978communicating} and (CCS)~\cite{milner1980calculus}, it
differs in that interactions between BTL expressions and their environment
happen implicitly through manipulation of a shared world state, not through
channel-based communication (as in CSP) or explicit labels for inputs and
outputs (as in CCS). The lack of such machinery is what makes behavior
trees so attractive to authors; it reduces the burden of needing to specify
how information is transmitted from one agent to another. However, it also
makes the dependencies between agents tacit and therefore difficult to
debug when things go wrong, which is what this paper aims to address.

Kleene algebra, particularly Kozen's variant with tests
(KAT)~\cite{kozen1997kleene}, offers
another touchstone for semantic insights; however, BTL does not quite
satisfy the Kleene conditions: (1) order matters in
selector semantics due to fallthrough, so selectors are not commutative;
(2) the annihilation law does not hold; $\btseq{\alpha; \btabort}$ is not
equivalent to $\btabort$ due to the state changes that $\alpha$ may incur.

\subsection{Example}
\label{sec:eval-example}

Below is BTL implementation of the behavior tree described in Figure~\ref{fig:investigate-bt-viz}. This and all future examples use an n-ary form of Seq and Sel defined in the obvious way.
\begin{lstlisting}[frame = single]
Sel{?heard_noise.set_target() +
    Seq{move_to_target(); investigate_target()} +
    Sel{idle_smoke() + idle_pace()}}
\end{lstlisting}
To illustrate how an BTL expression evaluates, we consider an evaluation of
this tree in an environment where the agent already has a reachable target
and has not heard a noise, i.e. the situation 
$\{\mathtt{has\_target}\}$. 
Starting evaluation at the root, the outer selector expression evaluates
each child in succession until one succeeds. The first child will fail
because the \texttt{heard\_noise} condition does not hold. The second
child, a sequence, will evaluate each of its children in succession. The
first action, predicated on having a target, evaluates by modifying the
world state such that the agent is in the same location as the target. Upon
the movement action succeeding, the \texttt{investigate\_target()} action
will be evaluated; however, this node fails in the absence of having heard
a noise, and that failure propagates to the root of the tree.

If instead we started in an environment
$\{\mathtt{has\_target},\mathtt{heard\_noise}\}$, then at the same point in
the tree, the \texttt{investigate\_target} action will succeed and
change the world state by replacing \texttt{has\_target} with
\texttt{no\_target} (in practice, this might have a more interesting effect
like updating variables representing the agent's knowledge of its target).
Because both children of the sequence evaluate to success, the sequence
evaluates to success. Thus, the root selector will itself evaluate to
success without evaluating the third branch, completing the evaluation of
the entire tree, and resulting in the state $\{\mathtt{no\_target}\}$.

\section{Compositional Reasoning}
\label{sec:types}

\newcommand{\mseq}{\mathsf{seq}}

Compositional reasoning for behavior trees means that understanding the
effects of a whole BT can be done by understanding the effects of its
subtrees. The type system we describe gives a precise account of the
conditions under which a BT has {\em successful} execution and the
consequences of that execution. Accounting for the range of behaviors
possible under failure is outside the scope of this paper (see Section
~\ref{sec:discussion}).  However, these types are richer than sets of
precondtions and postcondtions: they account for the ``reactive'' nature of
BTs by requiring dependencies to be filled not prior to execution but just
at the node of the tree where they are needed; types also describe
resources that are {\em released} periodically if they are not needed for
later use.

This ``open'' structure of behavior types makes the account of agents'
behavior amenable to analysis in the presence of multiple agent executing
in parallel: BTs may both incur and use changes in the environment.

\subsection{A linear type system, take 1}
\label{sec:interfaces}

\newcommand{\D}{\Delta}
\newcommand{\G}{\Gamma}
\renewcommand{\S}{\Sigma}
\newcommand{\lseq}[3][]{{#2} \vdash_{#1} {#3}}

\begin{figure} 
\fbox{%
\parbox{0.98\linewidth}{%
\[
\infer[init]
{
 \lseq{\G;p}{p}
}{}
\qquad
\infer[\one R]
{
 \lseq{\G;\cdot}{{\bf 1}}
}{}\qquad
\infer[{\bf 1}L]
{
 \lseq{\G;\D, {\bf 1}}{C}
}
{
 \lseq{\G;\D}{M:C}
}
\qquad
\infer[\top R]
{
  \lseq{\G;\D}{\top}
}
{}
\qquad
\mathrm{(no\ }\top L\mathrm{)} 
\]

\[
\infer[{\tensor}R]
{
 \lseq{\G;\D_1, \D_2}{A \tensor B}
}
{
 \lseq{\G;\D_1}{A}
\quad
 \lseq{\G;\D_2}{B}
}\qquad
\infer[{\tensor}L]
{
 \lseq{\G;\D, A \otimes B}{C}
}
{
 \lseq{\G;\D, A, B}{C}
}
\]

\[
\infer[{\lolli}R]
{
 \lseq{\G;\D}{A \lolli B}
}
{
 \lseq{\G;\D,A}{B}
}
\qquad
\infer[{\lolli}L]
{
 \lseq{\G;\D_1, \D_2,A \lolli B}{C}
}
{
 \lseq{\G;\D_1}{A}
\quad
 \lseq{\G;\D_2,B}{C}
}
\]

\[
\infer[{\with}R]
{
 \lseq{\G;\D}{A \with B}
}
{
 \lseq{\G;\D}{A}
\quad
 \lseq{\G;\D}{B}
}
\qquad
\infer[{\with}L_i]
{
 \lseq{\G;\D,A_1 \with A_2}{C}
}
{
 \lseq{\G;\D,A_i}{C}
}
\]

\[
\infer[\forall{R}]
{
 \lseq{\G;\D}{\forall{x{:}\tau}.\; A}
}
{
 \lseq{\G, x{:}\tau;\D}{A}
}
\qquad
\infer[\forall{L}]
{
 \lseq{\G;\D, \forall{x{:}\tau}.\;A}{C}
}
{
 \G \vdash t:\tau
 &
 \lseq{\G;\D,N[t]{:}A}{C}
}
\]

}}
\caption{A fragment of intuitionistic linear sequent calculus.}
\label{fig:dill}
\end{figure}

Our guiding principle for assigning types to BTL expressions adopts a
``formulas-as-processes'' point of view to imagine the proof-theoretic
semantics of what the formula admits provable under arbitrary environments.
Consider linear logic formulas 
$A ::= p \mid \one \mid \top \mid A \tensor A \mid A \with A \mid A \lolli
A$ and an intuitionistic sequenct calculus defining their provability
(following ~\cite{chang03judgmental}) shown in Figure~\ref{fig:dill}.

The following intuition guides the correspondence we seek:

\begin{itemize}
  \item Firing a leaf action $\btop{op}{args}$ of 
      type $S \lolli S'$ in an environment $\Delta$
      corresponds to the $\lolli$-left rule in linear sequent calculus: to
      succeed, it requires that the current environment match the
      antecedent of the action and then changes the environment to replace
      it with the consequent. Correspondingly,
      evaluating
      $op(args)$ in an environment $\Delta, \Delta'$ where $\Delta' \vdash
      S$ evaluates to $\Delta, S'$ in the operational semantics.
  \item The unit selector $\btabort$ always fails, having run out of
    options; this corresponds to the $\top$ unit of $\with$ in linear
    logic, which has no left rule, so everything is beneath it in the
    preorder.
  \item The unit sequence $\btskip$ does nothing, corresponding to the left
    rule of the unit $\one$ of $\tensor$.
    Correspondingly, the operational semantics of $\btskip{}$ take the
    environment $\Delta$ to itself.
  \item Selectors $\btsel{\alpha_1 + \alpha_2}$ {\em nearly\/} correspond to
    making a choice, as in Linear Logic's $\with$ operator.
    There is a difference in that $\with$ is symmetric; $A\with B$ and $B
    \with A$ are interprovable, whereas order matters in BTL selectors.
    However, certain reasoning principles apply: if either
    $\eval{\alpha_1}{\Delta}{\Delta_1}$ or
    $\eval{\alpha_2}{\Delta}{\Delta_2}$, then one of $\Delta_1$ or
    $\Delta_2$ will be the result of evaluating $\btsel{\alpha_1 + \alpha_2}$
    against $\Delta$.
\end{itemize}

For reasons described above, however, accounting for sequences will be more
difficult. It might be tempting to think that $\tensor$ is an appropriate
interpretation, despite the relative lack of ordering constraints, for
reasons similar to how $\with$ can approximate selectors. A conjectured
rule:

\[
  \infer[BAD\_RULE]{\btseq{\alpha_1;\alpha_2} : A_1 \tensor A_2}
    {\alpha_1 : A_1 & \alpha_2 : A_2}
\]

At this point we need to formulate the metatheorem we have so far been
implicitly expecting to hold:

\begin{conjecture}
\label{conj:typesound}
If $\alpha : A$ and $\Delta, A \vdash S$, then
$\eval{\alpha}{\Delta}{\Delta'}$ and $\Delta' \vdash S$.
\end{conjecture}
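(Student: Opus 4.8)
The plan is to prove Conjecture~\ref{conj:typesound} by structural induction on the typing derivation $\alpha : A$ (equivalently, on $\alpha$), exploiting two features of the setting: a world state $\Delta$ is always a multiset of \emph{atomic} propositions, and the goal $S$ is drawn from the positive fragment ($p(args)$, $\one$, and $\tensor$) of the action-specification grammar, whereas the behavior type $A$ may use the full connective set. The engine of the argument would be an \emph{inversion lemma} for the sequent $\Delta, A \vdash S$: because $\Delta$ is atomic and $S$ is positive, the only compound formula available on the left is $A$, so any cut-free derivation must decompose $A$, and by the permutability of the relevant left rules in the fragment of Figure~\ref{fig:dill} we may assume this decomposition happens at the root. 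Each top connective of $A$ would then be read off to license exactly one operational step of $\alpha$, with the residual sequent feeding the induction hypothesis.

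First I would dispatch the structural cases. For $\btskip : \one$, inverting the $\one$-left rule reduces $\Delta, \one \vdash S$ to $\Delta \vdash S$, matching the rule $\eval{\btskip}{\Delta}{\Delta}$. For $\btabort : \top$, the hypothesis $\Delta, \top \vdash S$ with positive $S$ is in fact \emph{never} derivable (there is no $\top$-left rule and no weakening to discard the $\top$), so this case is vacuous --- the restriction of $S$ to positive goals is what rescues it. For a leaf $\btop{op}{args} : S_0 \lolli S_0'$, permuting the $\lolli$-left rule on $S_0 \lolli S_0'$ to the root splits $\Delta = \Delta_1, \Delta_2$ with $\Delta_1 \vdash S_0$ and $\Delta_2, S_0' \vdash S$; since positive provability from an atomic context is just multiset matching, $\Delta_1$ supplies precisely the atoms of $S_0$, so the transition function succeeds and produces $\Delta_2, S_0'$, which by the second premise proves $S$. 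Selectors typed with $\with$ are the ``nearly'' case flagged in the text: the choice embodied by $\with L_i$ lines up with picking the first succeeding branch, up to the commutativity mismatch, which I would absorb into the induction hypotheses on the two subtrees.

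The main obstacle --- and, I expect, the reason this is stated as a conjecture under a type system labeled ``take 1'' --- is the sequence case under the tempting $\tensor$-rule the surrounding text flags as suspect, which types $\btseq{\alpha_1;\alpha_2}$ as $A_1 \tensor A_2$. Here the induction genuinely breaks, because $\tensor$ is order-insensitive while sequencing is strictly left-to-right. From $\Delta, A_1 \tensor A_2 \vdash S$ one obtains (via $\tensor$-left) $\Delta, A_1, A_2 \vdash S$, but the derivation is free to consume the arrow in $A_2$ \emph{before} the arrow in $A_1$, and such a proof has no operational counterpart. Concretely, take $\alpha_1 : p \lolli q$ and $\alpha_2 : r \lolli p$, so that $\btseq{\alpha_1;\alpha_2} : (p \lolli q)\tensor(r \lolli p)$, and let $\Delta = \{r\}$ with $S = q$. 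The sequent $\{r\}, (p \lolli q)\tensor(r \lolli p) \vdash q$ is derivable --- apply $r \lolli p$ to obtain $p$, then $p \lolli q$ to obtain $q$ --- yet operationally $\btseq{\alpha_1;\alpha_2}$ must run $\alpha_1$ first, which demands an atom $p \notin \{r\}$ and therefore fails, giving $\eval{\btseq{\alpha_1;\alpha_2}}{\{r\}}{\fail}$. Thus the hypothesis holds while the conclusion (successful evaluation) does not, so the conjecture is \emph{false} as stated. The diagnosis points directly at the fix: the sequential connective must record the dependency of $\alpha_2$ on the output of $\alpha_1$ rather than conjoining their types symmetrically --- e.g.\ by threading $\alpha_1$'s postcondition into $\alpha_2$'s type via a $\lolli$-style composition --- which is what I would expect a ``take 2'' type system to supply.
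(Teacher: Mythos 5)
Your resolution matches the paper's: the conjecture is refuted, under the take‑1 system with the $\tensor$ rule for sequences, by a counterexample in which the second action's output supplies the first action's precondition, so that the commutative $\tensor$ reading proves the goal while left-to-right evaluation fails --- your abstract $(p \lolli q)\tensor(r \lolli p)$ with $\Delta=\{r\}$ is structurally identical to the paper's Doors example $\btseq{\mathtt{open\_door};\mathtt{walk\_to\_door}}$ run from $\{\mathtt{at\_elsewhere},\mathtt{door\_unlocked}\}$. Your closing diagnosis (thread the first behavior's postcondition into the second's type) is exactly what the paper's subsequent $\mathsf{seq}$-based interface types and flat-context theorem supply.
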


(Recall that $S$ stands for a formula with no $\with$s or $\lolli$s,
representing a {\em successful} state in the course of a BTL expression's
execution.)
The proposed rule violates this conjecture; we show a counterexample next.

\subsection{The trouble with sequences: an example}

The following action specification describes a \texttt{Doors} world
in which agents may pass through open doors, open unlocked doors, and
unlock locked doors if they have keys:

\begin{lstlisting}[frame = single]
walk_to_door : at_elsewhere -o at_door.
pass_through : door_open * at_door -o door_open * through_door.
open_door    : door_unlocked * at_door -o door_open * at_door.
smash_door   : door_locked * at_door -o door_open * at_door.
close_door   : door_open * through_door -o door_unlocked * through_door.
\end{lstlisting}

\renewcommand{\a}{\mathtt{at\_elsewhere}}
\renewcommand{\b}{\mathtt{at\_door}}
\renewcommand{\c}{\mathtt{door\_unlocked}}
\renewcommand{\d}{\mathtt{door\_open}}

For a counterexample to Conjecture~\ref{conj:typesound}, let
$\alpha = \btseq{\mathtt{open\_door}; \mathtt{walk\_to\_door}}$ and let
$\Delta = \{\mathtt{at\_elsewhere}, \mathtt{door\_unlocked}\}$. According
to $BAD\_RULE$, $\alpha : A = (\b \tensor \c \lolli \d)\tensor(\a \lolli \b)$.
By straightforward rule applications, $\Delta, A \vdash \c$, but
it is not the case that $\eval{\btseq{\mathtt{open\_door};
\mathtt{walk\_to\_door}}}{\Delta}{\c}$.

%

In addition to the clear unsoundness of describing a sequential behavior
with a commutative connective, there are also concerns regarding the
granularity of concurrent execution.
Consider a simple sequential behavior for opening and going through a door:
\begin{lstlisting}[frame = single]
Seq{walk_to_door; open_door; pass_through; close_door}
\end{lstlisting}

A type we could reasonably expect to ascribe to this behavior is:

\[ \mathtt{at\_elsewhere} \tensor \mathtt{door\_unlocked} \lolli 
  \mathtt{through\_door} \tensor \mathtt{door\_unlocked} \]

This formula corresponds to the assumption that
if our starting environment has \texttt{at\_elsewhere} and
\texttt{door\_unlocked}, each element in this sequence of actions will
consume the output of the previous action as an input, resulting in
\texttt{through\_door}.  
Each successive action depends on the effects of previous actions: opening
the door assumes that the previous \verb|walk| action brought us to the
door; passing through assumes we successfully opened the door; and closing
the door assumes we passed through and the door is still open.

However, in a general, maximally concurrent environment, we would not be allowed to
make these assumptions: suppose, for example, another agent interferes and
closes the door just after we open it. This relaxed assumption instead
observes that we might forfeit {\em all} of the effects of previous
actions, resulting in the following type:

\begin{eqnarray*} 
  \mathtt{at\_elsewhere} \lolli \mathtt{at\_door} \tensor  
  (\mathtt{at\_door} \tensor \mathtt{door\_unlocked} \lolli & \\
  \mathtt{at\_door} \tensor \mathtt{door\_open} \tensor & \\
    (\mathtt{at\_door} \tensor \mathtt{door\_open}
    \lolli \mathtt{through\_door} \tensor \mathtt{door\_open} \tensor & \\
    (\mathtt{door\_open} \tensor \mathtt{through\_door} \lolli
    \mathtt{through\_door} \tensor \mathtt{door\_unlocked})))
\end{eqnarray*}

This formula characterizes the behavior that, at each step, a sequence
releases some resources into the world along with a ``continuation'' that
could, in some cases, potentially reabsorb those resources, or require new
ones along the way.

These two ascriptions correspond to different assumptions about
how behaviors interact with other behaviors manipulating the environment.
The former assumes an un-interruptable, ``critical section'' behavior to
sequences and gives a stronger guarantee, allowing us to treat the sequence
as a black-box behavior without worrying about internal failure. On the
other hand, the latter permits interruption and ``race condition''-like
scenarios that are common in games and interactive simulations in practice,
but offers less strict guarantees that reflect the complexity of reasoning
about fine-grained interaction.

Our type system makes the latter assumption that processes may be
interrupted, but we discuss the potential to accommodate both in
Section~\ref{sec:discussion}.


\subsection{Linear Behavior Interfaces}

We constrain linear logical formulas to the following grammar of
interfaces, expressed inductively as nested \emph{stagings} of inputs and
outputs (and choice between multiple possible interfaces):

\begin{eqnarray*}
N &::=& S \mid S \lolli N \mid S \tensor N \mid N \with N \mid \top
\end{eqnarray*}

This grammar mainly serves to prevent $\lolli$ from appearing to the left
of another $\lolli$ while representing staged inputs and outputs as
described above.
 
We assign types as linear logic formulas $N$ to BTL expressions
$\alpha$ with the judgment $\alpha :_{\Sigma} N$.  where $\alpha$ is an
expression, $N$ is an interface type, and $\Sigma$ is a specification
giving types $S \lolli S'$ to the actions used at the leaves of the trees.

The typing rules are as follows, with $\Sigma$ left implicit as an index to
the judgment except when it is needed. Atomic operations, conditions, 
the units, and selectors, are straightforward, and conditions must assume,
but then reproduce, the condition they depend on. Sequences are assigned a
type based on a computation $\mseq$ of the types of their components:

\[
  \infer
  {\btskip : \one}
  {}
\qquad
\infer
{\btabort : \top}
{}
\qquad
\infer
{op(args) :_{\Sigma} [args/xs] (S \lolli S')}
{\Sigma \vdash op : xs.\;S \lolli S'}
\]

\[
\infer
{\btsel{\alpha_1 + \alpha_2} : N_1 \with N_2}
{\alpha_1 : N_1 & \alpha_2 : N_2}
\qquad
\infer
{\btcond{S}{\alpha} : S \lolli S \tensor N}
{\alpha : N}
\qquad
\infer
{\btseq{\alpha_1; \alpha_2} : \mseq(N_1,N_2)}
{\alpha_1 : N_1 & \alpha_2 : N_2}
\qquad
\]

The $\mseq$ operator is defined as follows:
\begin{eqnarray*}
  \mseq(\one, N) &=& N\\
  \mseq(S_1, S_2) &=& S_1\tensor S_2\\
  \mseq(S, S'\tensor N) &=& (S\tensor S')\tensor N\\
  \mseq(S, N_1\with N_2) &=& \mseq(S, N_1)\with \mseq(S, N_2)\\
  \mseq(S_1, S_2 \lolli N) &=& S_1 \tensor (S_2 \lolli N)\\
  \mseq(S\tensor N_1, N_2)  &=& \mseq(S, \mseq(N_1, N_2))\\
  \mseq(S_1\lolli N_1, N_2) &=& S_1 \lolli \mseq(N_1, N_2)\\
  \mseq(N_1\with N_2, N)   &=& (\mseq(N_1,N)\with\mseq(N_2,N))
\end{eqnarray*}

It can be interpreted as pushing the requirements of the first formula to
the outside of the whole formula, then conjoining its consequences with the
specification of the second formula.  The correctness of this definition,
and of the type system in general, with respect to the operational
semantics, is considered next.

%

\subsection{Metatheory}

We revisit Conjecture~\ref{conj:typesound} and sketch a proof. First we
establish a lemma about the $\mseq$ operator:

\begin{lemma}
  \label{lemma:seq}
  If $\Delta, \mseq(N_1, N_2) \vdash S$ and $\Delta$ is \textit{flat}, i.e.
  consists only of propositions of the form $S$, then there exists $S_1$ such that
$\Delta, N_1 \vdash S_1$ and $\Delta, S_1 \vdash N_2$.
\end{lemma}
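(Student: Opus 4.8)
The plan is to prove Lemma~\ref{lemma:seq} by well-founded induction on the first interface $N_1$, following the recursive structure of $\mseq$. Every defining clause of $\mseq$ either terminates immediately or recurses only on first arguments that are structurally smaller than the original: even the nested clause $\mseq(S\tensor N_1', N_2) = \mseq(S, \mseq(N_1',N_2))$ invokes $\mseq$ only with first arguments $S$ and $N_1'$, each smaller than $S\tensor N_1'$. Hence the size of the first argument, with ties broken by the size of the second, is a legitimate induction measure. In every case the witness $S_1$ is assembled from the flat context $\Delta$ together with whatever resources the $N_1$-stage makes available. Flatness of $\Delta$ is used throughout: it guarantees that an input demand $S_a$ drawn from a sub-context of $\Delta$ is dischargeable by matching atoms, and that each intermediate witness I manufacture is itself flat and so eligible as the context of a recursive call. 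Since $\tensor L$ and $\one L$ are invertible, I first reduce $\Delta$ to a multiset of atoms, after which the only non-invertible left rule available acts on the $\mseq(N_1,N_2)$ formula, so inversion there drives the case analysis.

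The base and near-base cases are routine inversions. When $N_1 = \one$ we have $\mseq(\one,N_2)=N_2$, and the conclusion holds taking $S_1$ to be the tensor of the atoms of $\Delta$. When $N_1$ is a state I case on the head connective of $N_2$ exactly as the state-clauses of $\mseq$ dictate ($S_2$, $S'\tensor N$, $N\with N$, $S_2\lolli N$), inverting the premise on the displayed $\tensor$, $\with$, or $\lolli$ and reassembling. The genuinely inductive but still straightforward cases are $N_1 = N_1'\with N_1''$ and $N_1 = S_a\lolli N_1'$. For $\with$, the last rule deriving the premise must be $\with L_i$ selecting one conjunct, say $\mseq(N_1',N_2)$; the induction hypothesis applies to that branch and the $N_1$-stage sequent is rebuilt with the matching $\with L_i$. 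For $\lolli$, I invert using $\lolli L$ to split $\Delta = \Delta_A, \Delta_B$ with $\Delta_A \vdash S_a$ and $\Delta_B, \mseq(N_1',N_2)\vdash S$; flatness makes $\Delta_A\vdash S_a$ a pure atom-match, the induction hypothesis on $(N_1',N_2)$ over the flat context $\Delta_B$ supplies the witness, and $\lolli L$ reintroduces $S_a\lolli N_1'$ to recover the $N_1$-stage sequent.

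The main obstacle is the tensor-prefixed case $N_1 = S\tensor N_1'$, the only clause whose definition nests $\mseq$ inside itself as $\mseq(S\tensor N_1',N_2)=\mseq(S,\mseq(N_1',N_2))$. Here a single appeal to the induction hypothesis does not suffice: I expect to apply it twice, once to $(N_1',N_2)$ to split $\mseq(N_1',N_2)$ through an intermediate state, and once to $(S,\mseq(N_1',N_2))$ to peel off the leading state $S$, both calls being licensed by the strict decrease of the first-argument measure even though the inner result $\mseq(N_1',N_2)$ may be larger than $N_2$. The two intermediate witnesses must then be composed, which I will do by an admissible cut in the linear sequent calculus (standard for the fragment of Figure~\ref{fig:dill}) together with a $\tensor L$ recombining $S$ and $N_1'$ into $S\tensor N_1'$. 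Threading the intermediate state correctly through this double recursion, and checking that every manufactured context stays flat so the hypotheses remain applicable, is the delicate part of the argument; the remaining cases are bookkeeping by comparison.
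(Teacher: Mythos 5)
Your proposal is correct and takes essentially the same approach as the paper's own proof: induction on the recursive definition of $\mathsf{seq}$ (your lexicographic measure on $(|N_1|,|N_2|)$ is exactly what justifies that induction), inversion on the derivation after reducing the flat context, a witness of the form $\bigotimes(\Delta)\tensor S_1$ in the state cases, and the split $\Delta=\Delta_1,\Delta'$ with $\Delta_1\vdash S_a$ in the $\lolli$ case; like the paper, you effectively prove the conclusion in the form actually used by the soundness theorem ($\Delta,N_1\vdash S_1$ and $S_1,N_2\vdash S$) rather than the literal second conjunct $\Delta,S_1\vdash N_2$ as printed. Your only real departure is that you work out the nested clause $\mathsf{seq}(S\tensor N_1',N_2)=\mathsf{seq}(S,\mathsf{seq}(N_1',N_2))$, which the paper dismisses as straightforward, and your treatment of it --- two appeals to the induction hypothesis, both licensed by strict decrease of the first argument, composed by an admissible cut and $\tensor L$, with the singleton intermediate state remaining flat --- is correct and is arguably the case that most needed spelling out.
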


\begin{proof}
By induction on the definition of $\mseq$. We show the interesting cases.
\begin{itemize}
  \item \textbf{Case:} 
    $\mseq(S_1, S_2 \lolli N) = S_1 \tensor (S_2\lolli N)$.\\
    Assume $\Delta, S_1, S_2 \lolli N \vdash S$. In this case,
    we can just tensor together the first state and feed it into the
    second.
    $\Delta, S_1 \vdash \bigotimes(\Delta)\tensor S_1$, and
    $\Delta, \bigotimes(\Delta)\tensor S_1, S_2 \lolli N \vdash S$ by
    untensoring that proposition to get to the assumption.
  \item \textbf{Case:}
    $\mseq(S_1\lolli N_1, N_2) = S_1 \lolli \mseq(N_1, N_2)$.\\
    Assume $\Delta, S_1 \lolli \mseq(N_1, N_2) \vdash S$.
    Because the proof of this sequent concludes with an $S$, somewhere
    along the way we must discharge the $\lolli$, i.e. some part of
    $\Delta$ proves $S_1$. Rewrite $\Delta = \Delta_1, \Delta'$ where
    $\Delta_1 \vdash S_1$. Somewhere in the proof there is an application
    of $\lolli L$ such that $\Delta', \mseq(N_1, N_2) \vdash S$ is a
    subproof, and by inductive hypothesis, there exists $S'$ such that
    $\Delta', N_1 \vdash S'$ and $S', N_2 \vdash S$. 

    Now it suffices to show that $\Delta, S_1\lolli N_1 \vdash S'$ (since
    we already have $S', N_2 \vdash S$). This can be established by reusing
    the part of $\Delta$ that discharges $S_1$, using $\lolli L$ on 
    $\Delta_1 \vdash S_1$ and $\Delta', N_1 \vdash S'$.
  \item Remaining cases are straightforward.
\end{itemize}
\end{proof}

\begin{theorem}
If $\alpha : A$, $\Delta$ is flat, and $\Delta, A \vdash S$, then
$\eval{\alpha}{\Delta}{\Delta'}$ and $\Delta' \vdash S$.
\end{theorem}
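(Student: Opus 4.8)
The plan is to induct on the typing derivation $\alpha : A$, which (being syntax-directed) amounts to induction on $\alpha$. Throughout I will use two facts: evaluation preserves flatness, since world states are multisets of atomic facts and every transition only deletes and inserts atoms, so $\eval{\alpha}{\Delta}{\Delta'}$ with $\Delta$ flat yields a flat $\Delta'$; and cut is admissible for the linear calculus of Figure~\ref{fig:dill}, which I use to recompose subderivations. Since no rule types $\btrepeat{\alpha}$, repeaters fall outside the statement, and the induction ranges over $\btskip$, $\btabort$, operators, conditions, selectors, and sequences.

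The leaf, unit, and condition cases are direct. For $\btskip : \one$, the hypothesis $\Delta, \one \vdash S$ reduces by $\one L$ to $\Delta \vdash S$, and $\eval{\btskip}{\Delta}{\Delta}$ closes it. For $\btop{op}{args} : S_0 \lolli S_0'$, flatness of $S$ forces the derivation of $\Delta, S_0 \lolli S_0' \vdash S$ to discharge the $\lolli$ via $\lolli L$, splitting $\Delta = \Delta_1,\Delta_2$ with $\Delta_1 \vdash S_0$ and $\Delta_2, S_0' \vdash S$; as $\Delta$ and $S_0$ are flat, $\Delta_1$ is exactly the multiset of atoms of $S_0$, so the transition function fires to give $\Delta' = \Delta_2, S_0'$, and $\Delta_2, S_0' \vdash S$ is the desired $\Delta' \vdash S$. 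For $\btcond{S_0}{\alpha} : S_0 \lolli (S_0 \tensor N)$, discharging the $\lolli$ and applying $\tensor L$ yields $\Delta_1 \vdash S_0$ and $\Delta_2, S_0, N \vdash S$; the former certifies the operational guard $\Delta \admits S_0$ (linear provability from a sub-multiset implies affine provability), and cutting it back in gives $\Delta, N \vdash S$, to which the induction hypothesis for $\alpha : N$ applies. Finally $\btabort : \top$ is vacuous: with no $\top L$ rule and $S$ flat, $\Delta, \top \vdash S$ is underivable, matching the fact that $\btabort$ always fails.

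For $\btseq{\alpha_1; \alpha_2} : \mseq(N_1, N_2)$ I would invoke Lemma~\ref{lemma:seq}: from $\Delta, \mseq(N_1, N_2) \vdash S$ it produces an intermediate flat $S_1$ with $\Delta, N_1 \vdash S_1$ together with a second obligation consuming $S_1$ and $N_2$ to prove $S$. The induction hypothesis for $\alpha_1$ with goal $S_1$ gives $\eval{\alpha_1}{\Delta}{\Delta_1}$ (so the first child does not fail) with $\Delta_1 \vdash S_1$ and $\Delta_1$ flat; cutting $\Delta_1 \vdash S_1$ against the second obligation yields $\Delta_1, N_2 \vdash S$, and the induction hypothesis for $\alpha_2$ then produces $\eval{\alpha_2}{\Delta_1}{\Delta_2}$ with $\Delta_2 \vdash S$. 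Chaining through the sequence evaluation rule delivers $\eval{\btseq{\alpha_1; \alpha_2}}{\Delta}{\Delta_2}$ with $\Delta_2 \vdash S$.

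I expect the selector $\btsel{\alpha_1 + \alpha_2} : N_1 \with N_2$ to be the main obstacle, precisely because $\with$ is additive and symmetric whereas selector evaluation is ordered and deterministic. The clean half is when $\alpha_1$ fails: by the contrapositive of the induction hypothesis, failure of $\alpha_1$ forces $\Delta, N_1 \not\vdash S$, so any derivation of $\Delta, N_1 \with N_2 \vdash S$ must use $\with L_2$, yielding $\Delta, N_2 \vdash S$ and letting the hypothesis for $\alpha_2$ finish. The delicate half is when $\alpha_1$ succeeds: the semantics commit to $\alpha_1$'s result $\Delta_1$, yet the derivation may have chosen $\with L_2$, so I only know $\Delta, N_2 \vdash S$ and cannot in general conclude $\Delta_1 \vdash S$. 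Closing this gap is the crux — one must argue that the branch taken operationally is the branch witnessing provability, exactly where the $\with$/selector correspondence is imperfect — and I anticipate it requires either restricting attention to cases where the first successful branch is the one proving $S$, or weakening the conclusion to quantify existentially over which branch is evaluated.
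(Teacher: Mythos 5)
Your sequence case reproduces the paper's proof exactly: the paper proves \emph{only} that case, invoking Lemma~\ref{lemma:seq} to extract the intermediate $S_1$, applying the induction hypothesis to each child, and recomposing with the sequence evaluation rule, just as you do. Your leaf, unit, and condition cases fill in details the paper omits, and they are correct (in particular, forcing the $\lolli L$ decomposition from flatness of $S$ is the right argument for the operator case).

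The selector difficulty you flag is not a weakness of your technique --- it is a genuine counterexample to the theorem as stated, and the paper's proof silently skips that case. Concretely: take operators $op_1 : p \lolli q$ and $op_2 : p \lolli r$, let $\alpha = \btsel{op_1() + op_2()}$, $\Delta = \{p\}$, and $S = r$. Then $\alpha : (p \lolli q) \with (p \lolli r)$, and $\Delta, (p \lolli q) \with (p \lolli r) \vdash r$ by $\with L_2$ followed by $\lolli L$ and $init$. But evaluation is deterministic and committed: $op_1$ succeeds on $\{p\}$, so the only derivable evaluation is $\eval{\alpha}{\{p\}}{\{q\}}$, and $\{q\} \not\vdash r$. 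This is precisely the asymmetry the paper itself concedes earlier when it says selectors only ``nearly'' correspond to $\with$; the reasoning principle it offers there (``one of $\Delta_1$ or $\Delta_2$ will be the result'') is exactly the existentially weakened conclusion you propose, and it is that weaker statement --- not the theorem as written --- that the selector case can support. So your proposal is both more complete and more honest than the paper's own proof, and the repair you sketch (weaken the conclusion over branches, or restrict to derivations where the operationally chosen branch witnesses provability) is the correct direction.
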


\begin{proof}
By lexicographic induction on the typing derivation and proof. We show the
sequence case here.
\begin{itemize}
  \item \textbf{Case:}
    \[
      \infer{\btseq{\alpha_1; \alpha_2} : \mseq(N_1,N_2)}
      {\alpha_1 : N_1 & \alpha_2 : N_2}
    \]
    Known: $\Delta, \mseq(N_1,N_2) \vdash S$. By lemma, there exists $S'$
    such that $\Delta, N_1 \vdash S'$ and $S', N_2 \vdash S$.\\

    By i.h., $\eval{\alpha_1}{\Delta}{\Delta'}$ where $\Delta'\vdash S'$.
    By i.h., $\eval{\alpha_2}{\{S'\}}{\Delta''}$ where $\Delta'' \vdash S$.
    By appropriate equivalence between the positive propostion $S'$ and 
    context $\Delta'$, and by the sequence evaluation rule,
    $\eval{\btseq{\alpha_1;\alpha_2}}{\Delta}{\Delta''}$ where
    $\Delta'' \vdash S'$. 
\end{itemize}
\end{proof}

\subsection{Example}

We now return to the ``investigating a sound'' example whose evaluation was
shown in Section~\ref{sec:eval-example}. The computed type for the example:

is:
\begin{eqnarray*}
  & (\mathtt{heard\_noise} \lolli (\mathtt{heard\_noise}\lolli
  \mathtt{no\_target} \lolli \mathtt{has\_target})\\
\with & (\mathtt{has\_target} \lolli
  (\mathtt{has\_target}\tensor\mathtt{at\_target}\tensor \\
&
    (\mathtt{has\_target}\tensor\mathtt{at\_target}\tensor\mathtt{heard\_noise}
    \lolli \mathtt{no\_target}))\\
\with & (\mathtt{has\_cigarette}\lolli \one)\\
\with & (\one \lolli \one)
\end{eqnarray*}

\section{Implementation}
\label{sec:impl}
We implemented both an interpreter for BTL (with repeater nodes) and a type synthesis algorithm for BTL excluding repeaters, both following the descriptions in the paper. The implementation is written in 523 lines of Standard ML, including detailed comments, and we authored an additional 448 lines of examples, including those used in this paper.

The implementation is freely available on GitHub at (URL redacted for double-blind review).


\section{Discussion}
\label{sec:discussion}
A longer-term goal for this work is to be able to account for how behavior
trees are used in practice, to integrate the type system into the behavior
authoring process (perhaps through a combination of checking and synthesis), 
and to evaluate how it may make designers (particularly without programming
background) more effective. We anticipate using the implementation of BTs
in the Unreal Engine as a benchmark. Shorter term, there are a few theoretical
concerns we still need to consider. We now describe a roadmap for this
project.

\subsection{Parallel Composition}

\begin{wrapfigure}[17]{R}{0.5\textwidth}
\centering
\includegraphics[width=0.45\textwidth]{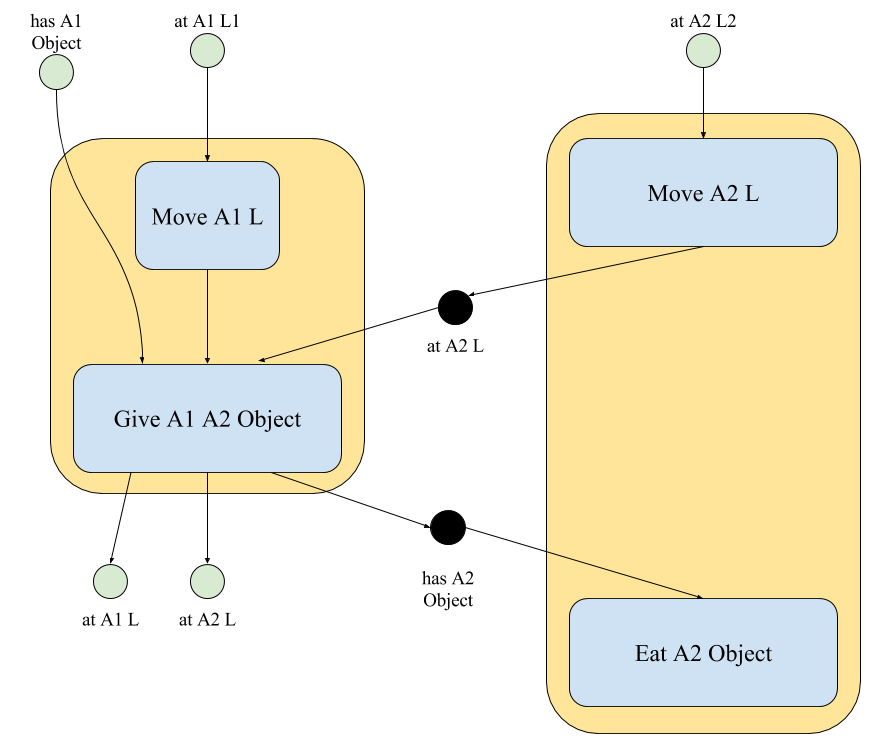}
\caption{Composing processes that interact.}
\label{fig:interaction}
\end{wrapfigure}
\leavevmode

\newcommand{\btpar}[1]{\mathsf{Par}\{#1\}}

Currently, the semantics of agents operating in the world concurrently is
not specified by the language.  To account for placing multiple
world-manipulating agents into the environment, we might consider
introducing a ``parallel'' operator to BTL:

$\alpha ::= \ldots  \mid \btpar{\alpha_1 \parallel \alpha_2}$

We may consider a few options for an operational semantics that warrant a
different type-theoretic treatment. For instance, perhaps parallel
behaviors split the state and operate in isolation until complete. This
behavior could be captured with the rule:
\begin{flalign*} 
&  \infer
{ \eval{\btpar{\alpha_1\parallel \alpha_2}}{\Delta}{\Delta_1', \Delta_2'}}
{ \Delta = \Delta_1, \Delta_2 &
\eval{\alpha_1}{\Delta_1}{\Delta_1'} &
\eval{\alpha_2}{\Delta_2}{\Delta_2'}
} &
\end{flalign*}

Additional rules may specify that if either subbehavior fails, the whole
behavior fails.

However, in practice, behavior trees allow for finer-grained interactions
between processes. The above specification precludes, for example, the
below two behaviors succeeding:

\begin{lstlisting}[frame = single]
// Agent 1 (a1)       // Agent 2 (a2)
   Seq{move(a1,L);       Seq{move(a2,L);
       give(a2,O)}           eat(a2,O)}
\end{lstlisting}

These behaviors will only succeed if they interact when run;
\lstinline|a1|'s action \lstinline|give(a2,O)| will only succeed if
\lstinline|a2|'s first action, \lstinline|move(a2,L)|, is permitted to
succeed first.
Figure~\ref{fig:interaction} describes visually the behavior specification
we would like to result in this interaction.

\newcommand{\process}[2]{#1/#2}
\newcommand{\steps}{\rightarrow}
\newcommand{\stepsalong}[1]{\xrightarrow{#1}}
\newcommand{\notsteps}{\not\rightarrow}

\newcommand{\cstep}{\rightsquigarrow}
\newcommand{\lpre}{\preceq}
\newcommand{\proves}{\Rightarrow}

To account for such fine-grained concurrent behaviors formally, we require a small-step
semantics over the judgment $\process{\alpha}{\Delta} \steps
\process{\alpha}{\Delta'}$. A sketch of this semantics that includes
parallel composition is in
Figure~\ref{fig:smallstep}. However, note that this semantics does not
properly handle failure; instead, it embodies the synchronous semantics of
behaviors simply pausing (failing to evolve) if their conditions are not
satisfied, instead permitting the possibility of a delayed transition if
their conditions become satisfied as another behavior evolves. While this
behavior may be useful in some scenarios, it is not universally desirable,
so we need a way to account for this behavior, perhaps through a
stack-based semantics with success and failure continuations. Likewise, the
type system has a clear extension to count for arbitrarily ``pausing''
processes ($\tensor$ is a straightforward interpretation), but accounting
for failure in the type system is also left to work.

\begin{figure} 
\fbox{%
\parbox{0.98\linewidth}{%

\[
\infer[step/\mathsf{skip}]
{\process{\Delta}{\mathsf{skip}} \steps \process{\Delta}}{}
{}
\qquad
\infer[step/op]
	{ \process{\Delta, \Delta_{A}}{op(args)} \steps \process{\Delta, B}}
    { \Sigma \vdash op[args] : A \lolli B 
    &
      \Delta_{A} \vdash A
    }
\]

\[
\infer[step/;]
{\process{\Delta}{\alpha_1;\alpha_2} \steps
			\process{\Delta'}{\alpha_2}
}
{
\process{\Delta}{\alpha_1} \steps^* \process{\Delta'}{}
}
{}
\qquad
\infer[step/+_1]
{\process{\Delta}{\alpha_1 + \alpha_2} \steps
	\process{\Delta}{\alpha_1}}{}
{}
\qquad
\infer[step/+_2]
{\process{\Delta}{\alpha_1 + \alpha_2} \steps \process{\Delta}{\alpha_2}}
{\process{\Delta}{\alpha_1} \notsteps}
\]

\[
\infer[step/?]
	{\process{\Delta, p}{?p.\;\alpha} \steps \process{\Delta, p}{\alpha}}
	{
     }
{}
\qquad
{}
\qquad
\infer[step/^*]
{\process{\Delta}{\alpha^*} \steps \process{\Delta'}{\alpha^*}}
{\process{\Delta}{\alpha} \steps^* \process{\Delta'}{}}
\]

\[
\infer[step/{\parallel}_1]
	{\process{\Delta_1, \Delta_2}{\actpar{\alpha_1}{\alpha_2}} 
    	\steps
     \process{\Delta_1',\Delta_2}{\actpar{\alpha_1'}{\alpha_2}}}
    {\process{\Delta_1}{\alpha_1}
    	\steps \process{\Delta_1'}{\alpha_1'}}
{}\qquad
\infer[step/{\parallel}_2]
{\process{\Delta_1, \Delta_2}{\actpar{\alpha_1}{\alpha_2}}
	\steps
  \process{\Delta_1, \Delta_2}{\actpar{\alpha_1}{\alpha_2'}}}
{\process{\Delta_2}{\alpha_2} \steps \process{\Delta_2'}{\alpha_2'}
}
\]

}}
\caption{A small-step semantics for BTL without failure.}
\label{fig:smallstep}
\end{figure}

\subsection{Theoretical extensions}

In addition to accounting for parallel execution, we also need to consider
repeater nodes. The operational semantics are fairly easy to specify, but
guaranteeing convergence of computing fixed points for a type-based
characterization may prove difficult. Recursive types have been
successfully integrated into linear logic~\cite{baelde2012least}, and we
plan to investigate their use, although readability may remain a challenge.

Another step we would like to take is to introduce additional forms of lightweight verification
on top of the type system. For instance, selectors are often designed with
the intent that all possible cases are covered: each child of the selector
is guarded by a condition, and the disjunction of the conditions
characterizes an {\em invariant} of the state. Provided a proof that the
invariant actually holds, it may be useful to simplify the type to omit the
guards. This corresponds to provability between e.g. 
$(A\oplus B) \tensor ((A \lolli C) \with (B \lolli D))$ and
$(C \oplus D)$.

Next, while we have established a correspondence between the type system
and evaluation of {\em successful} behaviors, we believe we can formulate a
conjecture to the effect that the situations in which types fail to yield a
flat context (because there is some implication that cannot be discharged
on the left, say) correspond to the failure cases of execution. We expect
this proof will be more difficult than the former.

%
%
%
%
%

\section{Conclusion}
\label{sec:conclusion}
We have presented a formal semantics and type system for a fragment of behavior trees as
they are used to describe characters in virtual environments. Our system
includes a reference implementation and correctness proofs.  This work
represents substantial new ground broken towards a longer-term vision of
authoring robust, designer-friendly specifications for reactive agent
behaviors.

If our long-term vision is successful, we can enable several new things
for behavior authors: integrating hand-authored trees with behavior
synthesis algorithms through linear logic theorem proving (akin to
planning); the development of {\em behavior libraries} consisting of
reusable, parameterized behavior trees whose specification precludes
examining the entire tree; and certified behaviors that are guaranteed to
succeed under certain environments. These features would improve the
effectiveness of developing agents in virtual worlds with varied
applications in entertainment, arts, simulation modeling, research
competitions and challenges, and computing education.


\bibliography{main}

%

\end{document}